\documentclass[letterpaper, 10 pt, conference]{ieeeconf}  
\IEEEoverridecommandlockouts     
\overrideIEEEmargins
\pdfminorversion=4

\newtheorem{mydef}{Definition}

\newtheorem{myprob}{Problem}

\newtheorem{mypro}{Proposition}
\newtheorem{myexm}{Example}

\newtheorem{remark}{Remark}

\usepackage{amsfonts} 
\DeclareMathSymbol{\shortminus}{\mathbin}{AMSa}{"39}

\usepackage{multirow}
\usepackage{makecell}
\usepackage{booktabs}
\usepackage{array}

\usepackage{graphicx}
\graphicspath{{./Pics/}}
\DeclareGraphicsExtensions{.pdf}
\usepackage{float}
\usepackage{amsmath}
\usepackage{amsfonts}
\usepackage{epsfig}
\usepackage{graphicx}
\usepackage{arydshln}
\usepackage{algpseudocode}
\usepackage{verbatim}
\usepackage{subfigure}
\usepackage{enumerate}
\usepackage{rotating}
\usepackage{color}
\usepackage{caption} 
\usepackage{mathrsfs}
\usepackage{amssymb}
\usepackage[linesnumbered,ruled,vlined]{algorithm2e}
\usepackage{cite}
\usepackage{float}
\usepackage{soul}
\usepackage{tikz}
\usepackage{hyperref}
\hypersetup{hidelinks} 
\usepackage{soul}
\usepackage{booktabs}
\usepackage{multirow}
\usepackage{makecell}
\usepackage{array}
\usetikzlibrary{arrows.meta}
\usetikzlibrary{patterns}
\usepackage{tcolorbox}
\usepackage{fancyhdr}

\floatname{algorithm}{Procedure}

\title{\LARGE \bf
Online Monitoring of Dynamic Systems for Signal Temporal Logic Specifications with Model Information}

\author{Xinyi Yu, Weijie Dong, Xiang Yin and Shaoyuan Li
\thanks{This work was supported by the National Key Research and Development Program of China (2018AAA0101700) and the National Natural Science Foundation of China (62061136004, 62173226, 61833012).}
	\thanks{Xinyi Yu, Weijie Dong, Xiang Yin and Shaoyuan Li are with Department of Automation and Key Laboratory of System Control and Information Processing, Shanghai Jiao Tong University, Shanghai 200240, China.
	{\tt\small $\{$yuxinyi-12, wjd\_dollar, yinxiang, syli$\}$@sjtu.edu.cn}. }%
}

\begin{document}
	
\maketitle
\thispagestyle{empty}
\pagestyle{empty}
\setlength{\abovecaptionskip}{0pt}
\setlength{\belowcaptionskip}{3pt}
\setlength{\textfloatsep}{0pt}

\begin{abstract}
Online monitoring aims to evaluate or to predict, at runtime, whether or not the behaviors of a system satisfy some desired specification. It plays a key role in safety-critical cyber-physical systems. In this work, we propose a new model-based approach for online monitoring for specifications described by Signal Temporal Logic (STL) formulae. Specifically, we assume that the observed state traces are generated by an underlying dynamic system whose model is known. The main idea is to consider the dynamic of the system when evaluating the satisfaction of the STL formulae. To this end, effective approaches for the computation of feasible sets for STL formulae are provided. We show that, by explicitly utilizing the model information of the dynamic system, the proposed online monitoring algorithm can falsify or certify of the specification in advance compared with existing algorithms, where no model information is used. We also demonstrate the proposed monitoring algorithm by case studies.
\end{abstract}

\section{Introduction}
Cyber-Physical Systems (CPS) are man-made modern engineering systems involving both computational devices and physical dynamics. Safety is one of the major considerations in the designs of many CPS such as intelligent transportation systems, smart manufacturing systems and medical devices. For those safety-critical systems, it is  crucial to determine whether or not the behaviors of the system satisfy some desired high-level specifications. For example, once we detect that the system has violated or will inevitablely violate the desired specification, additional corrective actions can be taken to ensure safety.

Specification-based monitoring is one of the major techniques in evaluating behavior correctness of CPS \cite{bartocci2018specification}.  In this context, it is usually assumed that the desired behavior of the system is described by a specification formula and the state traces (a.k.a.\ signals) generated by the system is observed by a \emph{monitor} that can issue alarms when the specification is violated. In the past years, numerous algorithms have been developed for monitoring specifications described by, e.g., Linear Temporal Logic (LTL) \cite{eisner2003reasoning}, Metric Temporal Logic (MTL) \cite{thati2005monitoring,dokhanchi2014line} and  Signal Temporal Logic (STL) \cite{donze2013efficient,deshmukh2017robust}.
Recent applications of specification-based monitoring techniques include, e.g.,  autonomous vehicles \cite{sahin2020autonomous} and smart cities \cite{ma2021novel}.   

Depending on what information can be utilized by the monitor, the monitoring problem can be categorized as \emph{offline} and \emph{online}. In offline monitoring, it is assumed that the complete signal to evaluate has already been generated and the monitor needs to determine either the Boolean satisfaction or the quantitative satisfaction degree of the complete signal. Such offline technique is usually used in the design phase to evaluate the simulated traces of the system prototype.  On the other hand, when the CPS is operating online, the monitor only observes \emph{partial} state trace that has been generated so far. Therefore, \emph{online monitoring} focuses on evaluating signals in real time during the operation of the system in order to, e.g., issue   alarms or to trigger corrective actions. 

In the context of online monitoring, the monitor may make the following evaluations on the observed partial signals: 
(i) the specification cannot be satisfied, i.e., there no future possibility to correct the signal;
(ii) the specification has already been satisfied, i.e.,  the future signal does not matter; or 
(iii) inconclusive, i.e., the signal can be either satisfied or not depending on what will happen in the future.  In the past years, numerous algorithms have been developed for online monitoring for specifications described by temporal logic formulae.  For example, the basic setting is to consider monitoring the Boolean satisfaction of LTL formulae \cite{bauer2011runtime,abate2019monitor,mascle2020ltl} or MTL formulae \cite{ho2014online}.  
In \cite{dokhanchi2014line, deshmukh2017robust}, algorithms have been developed for quantitatively monitoring the satisfaction of specifications by using robust semantics of STL formulae.

Most of the aforementioned online monitoring techniques are \emph{model-free} in the sense that the satisfaction of the specification is only evaluated based on the observed signal without considering the dynamic of the  system. In some cases, however, the model of the underlying system, when it is known, can provide additional information to accelerate monitoring process. For example, let us consider a scenario, where for an observed signal, a model-free monitor may provide inconclusive evaluation since the partial signal can be extended to either satisfiable  or unsatisfiable  signals. However, those satisfiable  continuations may not be feasible physically in the dynamic system. In this scenario, by leveraging the model information of the dynamic system, the monitor can better assert that the specification cannot be satisfied before it is actually violated.  

Motivated by the above observations, in this paper, we propose a new \emph{model-based} approach for online monitoring of dynamic systems. Specifically, we consider specifications described by a fragment of STL formulae, where  the horizons of different temporal operators have no overlap.  STL formulae are interpreted over continuous time signals and have the advantage of   quantitatively  evaluating the degree of the satisfaction or violation using robust semantics \cite{maler2004monitoring,lindemann2018control,gilpin2020smooth,salamati2021data,hashimoto2022stl2vec,lindemann2019robust}. 
The monitor aims to issue alarms when the specification has  already or will inevitably be violated. However, different from existing approaches, here we explicitly consider the model information of underlying dynamic system. Specifically, we consider a discrete-time nonlinear system. In order to incorporate the model information into the evaluation of STL formulae, we propose the notion of \emph{feasible sets}, which are the regions of states from which STL formulae can potential be satisfied considering the system dynamic. Effective algorithms have been developed for computing feasible sets offline. To monitor the specification in real-time, we propose online monitoring algorithms that correctly combine both the online observed partial signals and the offline computed feasible sets. We show that the proposed model-based monitoring algorithm may predict the violation of the specification in advance compared with existing model-free approaches. Hence, it may leave more time for the system to take corrective actions to ensure safety.

The rest of the paper is organized as follows. 
We present some basic preliminaries in Section~\ref{sec:pre} and formulate the problem in Section~\ref{sec:prob}.  Section~\ref{sec:online} present the main body of the  online monitoring algorithm, which uses feasible sets that are computed offline in  Section~\ref{sec:offline}.  The overall framework is demonstrated by   case studies in Section~\ref{sec:case} and finally, we conclude this work in Section~\ref{sec:con}.

\section{Preliminary}\label{sec:pre}
\subsection{System Model}

We consider a discrete-time control system of form
\begin{equation}\label{eq:system}
	x_{k+1} = f(x_k, u_k),
\end{equation}
where 
$x_k \in \mathcal{X}\subseteq \mathbb{R}^n$ is the state at time $k$, 
$u_k \in \mathcal{U}\subseteq \mathbb{R}^m$ is the control input at time $k$
and 
$f:\mathcal{X}\times \mathcal{U} \to \mathcal{X}$ is a dynamic function of the system.

Suppose that the system is in state $x_k\in  \mathcal{X}$ at time instant $k\in\mathbb{Z}_{\geq 0}$.  Then given a sequence of control inputs
$\mathbf{u}_{k:N-1}=  u_k u_{k+1} \dots u_{N-1}  \in \mathcal{U}^{N-k}$,
the  solution of the system is a sequence of states
$\xi_f(x_k,\mathbf{u}_{k:N-1}) = {\mathbf{x}_{k+1:N} = } x_{k+1} \dots x_{N}\in \mathcal{X}^{N-k}$ such that $x_{i+1}=f(x_i,u_i), i=k,\dots, N-1$.

\subsection{Signal Temporal Logic}
We use Signal Temporal Logic (STL) formulae  with bounded-time temporal operators \cite{maler2004monitoring} to describe whether or not the trajectory of the system satisfies some desired high-level properties. 
Formally, the syntax of STL formulae is as follows 
\[
\Phi ::=   \top\mid \pi^\mu \mid \neg \Phi \mid \Phi_1 \wedge \Phi_2 \mid \Phi_1 \mathbf{U}_{[a,b]} \Phi_2,
\]
where $\top$ is the \textsf{true} predicate, 
$\pi^\mu$ is an atomic predicate whose truth value is determined by the sign of its underlying predicate function $\mu:\mathbb{R}^n \to \mathbb{R}$  and it is true at state $x_k$ when $\mu(x_k) \geq 0$; otherwise it is false.
Notations $\neg$ and $\wedge$ are the standard Boolean operators ``negation" and ``conjunction", respectively,  which can further induce ``disjunction" by $\Phi_1 \vee \Phi_2:=\neg(\neg \Phi_1 \wedge \neg \Phi_2)$
and ``implication" by $\Phi_1 \to \Phi_2:= \neg \Phi_1 
\vee   \Phi_2$. 
$\mathbf{U}_{[a,b]}$ is the temporal operator ``\emph{until}", where $a,b\in \mathbb{N}$ are two time instants with $a \leq b$ 
and $[a,b]\subseteq \mathbb{N}$ denotes the set of all integers between
$a$ and $b$. 

STL formulae are evaluated on state sequence $\mathbf{x}=x_0x_1\cdots$.  
We use notation $(\mathbf{x},k) \models \Phi$ to denote that sequence $\mathbf{x}$ satisfies STL formula $\Phi$ at time instant $k$. 
The reader is referred to \cite{maler2004monitoring} for more details on the semantics of STL formulae. 
Particularly, we have 
$(\mathbf{x},k)\models \pi^\mu$ iff $\mu(x_k)\geq 0$, i.e., $\mu(x_k)$ is non-negative for the current state $x_k$,  
and 
$(\mathbf{x},k)\models \Phi_1 \mathbf{U}_{[a,b]} \Phi_2$ 
iff $\exists k' \in [k+a, k+b]$ such that $(\mathbf{x},k') \models \Phi_2$
and  $\forall k'' \in [k, k']$, we have $(\mathbf{x},k'') \models \Phi_1$, i.e., 
$\Phi_2$ will hold at some instant between $[k+a, k+b]$ in the future and before that $\Phi_1$ always holds. 
Furthermore, 
we can also induce temporal operators
\begin{itemize}
	\item 
	``\emph{eventually}" $\mathbf{F}_{[a,b]} \Phi:= \top \mathbf{U}_{[a,b]} \Phi$
	such that it holds when $(\mathbf{x},k) \models \Phi$ for some $k'\in [k+a,k+b]$; and 
	\item 
	``\emph{always}" $\mathbf{G}_{[a,b]} \Phi:=\neg \mathbf{F}_{[a,b]} \neg \Phi$ such that it holds  when $(\mathbf{x},k) \models \Phi$ for any $k'\in [k+a,k+b]$. 
\end{itemize}
We write $\mathbf{x} \models \Phi$ whenever $(\mathbf{x},0) \models \Phi$.

Given an STL formula $\Phi$, in fact, it is well-known that the satisfaction of  $\Phi$ can be completely determined only by those states within its \emph{horizon}.   
Specifically, we will use notation $\Phi^{[S_\Phi,T_\Phi]}$ to emphasize that the satisfaction of formula $\Phi$ only depends on time horizon  $[S_\Phi,T_\Phi]$, 
where   $S_\Phi$ is the starting instant of $\Phi$ which is the minimum time instant that appears in the formula 
and $T_\Phi$ is the terminal instant of $\Phi$ which is the maximum sum of all nested upper bounds.
For example, for $\Phi = \mathbf{F}_{[2,7]}x^{\mu_1} \wedge  \mathbf{G}_{[3,12]} x^{\mu_2}$, we have $T_\Phi = \max\{7,12\} = 12$ and $S_\Phi = \min\{2, 3\} = 2$.

\section{Problem Formulation}\label{sec:prob}
\subsection{Fragment of STL Formulae}
In this paper, we consider the following restricted but still expressive enough fragments of STL formulae:
\begin{subequations} \label{eq:stl}
	\begin{align}
		\varphi ::= \top \mid \pi^\mu \mid \neg \varphi \mid \varphi_1 \wedge \varphi_2, \qquad \qquad  \\
		\Phi::= \mathbf{F}_{[a,b]} \varphi \mid \mathbf{G}_{[a,b]} \varphi \mid \varphi_1\mathbf{U}_{[a,b]} \varphi_2 \mid \Phi_1 \wedge \Phi_2,
	\end{align}
\end{subequations}
where $\varphi_1, \varphi_2$ are formulae of class $\varphi$, and $\Phi_1, \Phi_2$ are formulae of class $\Phi$.  
Specifically, we only allow the temporal operators be applied once for Boolean formulae. 

Note that, for the standard ``until" operator, 
$\varphi_1\mathbf{U}_{[a,b]} \varphi_2 $ requires that 
$\varphi_1$ holds \emph{from the initial instant} before $\varphi_2$ holds. 
In order to facilitate subsequent expression, we introduce a new temporal operator 
$\mathbf{U}'$ defined by  $(\mathbf{x},k)\models \Phi_1 \mathbf{U}'_{[a,b]} \Phi_2$ 
iff $\exists k' \in [k+a, k+b]$ such that $(\mathbf{x},k') \models \Phi_2$
and  $\forall k'' \in [k+a, k']$, we have $(\mathbf{x},k'') \models \Phi_1$.
Compared with $\mathbf{U}$, the new operator $\mathbf{U}'$ only required that $\varphi_1$ holds \emph{from  instant $a$} before $\varphi_2$ holds. 
Throughout this paper, we will refer ``$\mathbf{U}'$" to as the ``until" operator. 
As illustrated by Figure~\ref{fig:monitor}, our setting is without loss of generality since we can express the standard $\mathbf{U}$ using  $\mathbf{U}'$ by: 
$(\mathbf{x}, k) \models \Phi_1 \mathbf{U}_{[a,b]} \Phi_2$ iff 
\[
	(\mathbf{x}, k) \models (\Phi_1 \mathbf{U}'_{[a,b]} \Phi_2) \wedge (\mathbf{G}_{[0, a]} \Phi_1).
\]

Furthermore, we can always rewrite Boolean formula $\varphi$ in (\ref{eq:stl}a) in terms of the region of states satisfying the formula.
Specifically, for  predicate $\pi^\mu$, its satisfaction region is the solution of inequality $\mu(x) \!\geq\! 0$;  we denote it by set $\mathcal{H}^{\mu}$, i.e., 
\[
\mathcal{H}^{^\mu} = \{x \in \mathcal{X} \mid \mu(x) \geq 0 \}.
\]
Similarly, we have 
$\mathcal{H}^{\neg \varphi} = \mathcal{X} \setminus \mathcal{H}^{\varphi}$ and $\mathcal{H}^{\varphi_1 \wedge \varphi_2} = \mathcal{H}^{\varphi_1} \cap \mathcal{H}^{\varphi_2}$.
Hereafter, instead of using   $\varphi$, 
we will only write it  as $x \in \mathcal{H}^\varphi$ or simply  $x \in \mathcal{H}$  using its satisfaction region. 

Based on the above discussion,  STL formulae $\Phi$ in (\ref{eq:stl}) can be expressed equivalently by:
\begin{align}\label{eq:stl'}
&\Phi  ::=  \\
&  \mathbf{F}_{[a,b]} x \!\in\! \mathcal{H} \mid \mathbf{G}_{[a,b]} x \!\in\! \mathcal{H} \mid   
  x \!\in \mathcal{H}^1 \mathbf{U}'_{[a,b]} x \!\in\! \mathcal{H}^2 \mid \Phi_1 \wedge \Phi_2, \nonumber
\end{align}
where $\mathcal{H}\subseteq \mathbb{R}^n$ is a set of states representing the satisfaction region of a Boolean formula. Finally, we assume that for each temporal operator that appears in $\Phi$, their time intervals have no overlap. In other words, for each time instant $k$, there is at most only one temporal operator applies 
and we denote by $\mathcal{O}_k\in \{\textsf{none},\mathbf{G}, \mathbf{F}, \mathbf{U}'\}$ the unique temporal operator that is effective at instant $k$. 
\begin{remark}
The above assumption  is without loss of generality when two ``always" operators have interval overlap. 
For example, for formula $\Phi=  \mathbf{G}_{[0,2]} x \!\in\! \mathcal{H}_1 \wedge \mathbf{G}_{[1,3]} x \!\in\! \mathcal{H}_2$, we can express it equivalently as $\Phi=  \mathbf{G}_{[0,1]} x \!\in\! \mathcal{H}_1 \wedge
 \mathbf{G}_{[1,2]} x \!\in\! \mathcal{H}_1\cap \mathcal{H}_2
 \wedge  \mathbf{G}_{[2,3]} x \!\in\! \mathcal{H}_2$. 
 However, this assumption is restrictive when  there are overlaps of other operators.  
\end{remark}

In summary, we can write the STL formula under consideration as
\begin{equation}\label{eq:stl-seg}
    \Phi = \bigwedge_{i=1}^{N} \Phi_i^{[a_i, b_i]}, 
\end{equation}
where $N$ denotes the number of sub-formulae 
and each 
$\Phi_i^{[a_i, b_i]}$  is a sub-formula that applies within time interval $[a_i, b_i]$ in the form of 
$\mathbf{G}_{[a_i, b_i]} x \!\in\! \mathcal{H}_i$, $\mathbf{F}_{[a_i, b_i]} x \!\in\! \mathcal{H}_i$ or $x \!\in\! \mathcal{H}_{i}^1 \mathbf{U}'_{[a_i, b_i]} x\! \in\! \mathcal{H}_{i}^2$. 
Without loss of generality, we assume that the horizon of each sub-formulae yields a partition of $[0,T_\Phi]$,
i.e., $[a_1,b_1]\dot{\cup}\cdots \dot{\cup}[a_N,b_N]=[0,T_\Phi]$  with $a_1 = 0$ and $b_N = T_\Phi$. 
This is because if we have $\mathcal{O}_k=\textsf{none}$ for $k\in [a,b]$, 
then we can always add a new formulae $\mathbf{G}_{[a,b]}\top$  or $\mathbf{G}_{[a,b]} x \in \mathcal{X}$ for this interval. 
As a result, hereafter, at instant $k$, the effective temporal operator is  $\mathcal{O}_k\in \{\mathbf{G}, \mathbf{F}, \mathbf{U}'\}$.

\begin{figure}[t]
	\centering
\usetikzlibrary{intersections}
\usetikzlibrary{patterns}
\usetikzlibrary{quotes,angles}
\usetikzlibrary{calc}
\usetikzlibrary{decorations.pathreplacing}

\begin{tikzpicture}
	\foreach \x in {0,1,2,...,15,16}
	\draw (\x*0.5, 2.5-0.03) -- (\x*0.5, 2.5+0.03);
	\foreach \x in {0,1,2,...,15}
	\draw (\x*0.5+0.25, 2.5 - 0.2) node {\tiny \x};
	\draw[very thick] (0,2.5) -- (8,2.5);
	\draw[thick] (0, 2.5-0.1) -- (0, 2.5+0.1);
	\draw[thick] (8, 2.5-0.1) -- (8, 2.5+0.1);
	\draw[thick] (0.5, 2.5-0.1) -- (0.5, 2.5+0.1);
	\draw[thick] (2, 2.5-0.1) -- (2, 2.5+0.1);
	\draw[decorate,decoration={brace,amplitude=5pt}] (0.5, 2.5+0.15) -- (2, 2.5+0.15);
	\draw (1.25, 2.5+0.45) node {\scriptsize $\mathbf{U}$};
	\draw[thick] (3, 2.5-0.1) -- (3, 2.5+0.1);
	\draw[thick] (5, 2.5-0.1) -- (5, 2.5+0.1);
	\draw[decorate,decoration={brace,amplitude=5pt}] (3, 2.5+0.15) -- (5, 2.5+0.15);
	\draw (4, 2.5+0.45) node {\scriptsize $\mathbf{F}$};
	\draw[thick] (6, 2.5-0.1) -- (6, 2.5+0.1);
	\draw[thick] (8, 2.5-0.1) -- (8, 2.5+0.1);
	\draw[decorate,decoration={brace,amplitude=5pt}] (6, 2.5+0.15) -- (8, 2.5+0.15);
	\draw (7, 2.5+0.45) node {\scriptsize $\mathbf{G}$};
	\draw (4,1.9) node {(a)};

	\foreach \x in {0,1,2,...,15,16}
	\draw (\x*0.5, 1-0.03) -- (\x*0.5, 1+0.03);
	\foreach \x in {0,1,2,...,15}
	\draw (\x*0.5+0.25, 1 - 0.2) node {\tiny \x};
	\draw[very thick] (0,1) -- (8,1);
	\draw[thick] (0, 1-0.1) -- (0, 1+0.1);
	\draw[thick] (8, 1-0.1) -- (8, 1+0.1);
	\draw[thick] (0.5, 1-0.1) -- (0.5, 1+0.1);
	\draw[thick] (2, 1-0.1) -- (2, 1+0.1);
	\draw[decorate,decoration={brace,amplitude=5pt}] (0.5, 1+0.15) -- (2, 1+0.15);
	\draw (1.25, 1+0.45) node {\scriptsize $\mathbf{U}'$};
	\draw[decorate,decoration={brace,amplitude=5pt}] (0, 1+0.15) -- (0.5, 1+0.15);
	\draw (0.25, 1+0.45) node {\scriptsize $\mathbf{G}$};
	\draw[thick] (3, 1-0.1) -- (3, 1+0.1);
	\draw[thick] (5, 1-0.1) -- (5, 1+0.1);
	\draw[decorate,decoration={brace,amplitude=5pt}] (3, 1+0.15) -- (5, 1+0.15);
	\draw (4, 1+0.45) node {\scriptsize $\mathbf{F}$};
	\draw[thick] (6, 1-0.1) -- (6, 1+0.1);
	\draw[thick] (8, 1-0.1) -- (8, 1+0.1);
	\draw[decorate,decoration={brace,amplitude=5pt}] (6, 1+0.15) -- (8, 1+0.15);
	\draw (7, 1+0.45) node {\scriptsize $\mathbf{G}$};
	\draw[decorate,decoration={brace,amplitude=5pt}] (2, 1+0.15) -- (3, 1+0.15);
	\draw (2.5, 1+0.45) node {\scriptsize $\mathbf{G}$};
	\draw[decorate,decoration={brace,amplitude=5pt}] (5, 1+0.15) -- (6, 1+0.15);
	\draw (5.5, 1+0.45) node {\scriptsize $\mathbf{G}$};
	\draw (4,0.4) node {(b)};
\end{tikzpicture}
	\caption{Equivalence between operators $\mathbf{U}$ and $\mathbf{U}'$ in online monitoring.}
	\label{fig:monitor}
\end{figure}
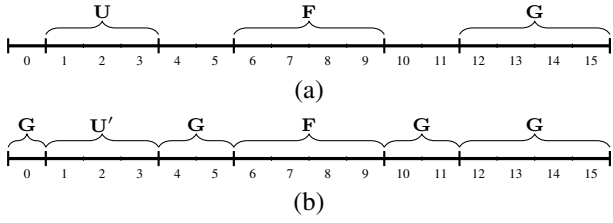

\begin{myexm} 
Let us consider the following  STL formula of form~\eqref{eq:stl}   with horizon  $T_\Phi = 15$
\begin{align}\label{eq:example}
    &\Phi = \\
    &( x \!\in\! \mathcal{H}_1 \mathbf{U}_{[1,3]} x \!\in\! \mathcal{H}_2 ) \wedge ( \mathbf{F}_{[6,9]} x \!\in\! \mathcal{H}_3 ) \wedge ( \mathbf{G}_{[12,15]} x \!\in\! \mathcal{H}_4 ). \nonumber
\end{align}  Its diagram is shown in Figure~\ref{fig:monitor}(a).
Equivalently, we can also write it in the form of~\eqref{eq:stl'}, $\Phi = ( \mathbf{G}_{[0,0]} x \!\in\! \mathcal{H}_1) 
	\wedge 
	( x \!\in\! \mathcal{H}_1 \mathbf{U}'_{[1,3]} x \!\in\! \mathcal{H}_2 ) 
	\wedge
	(\mathbf{G}_{[4,5]} x \!\in\! \mathcal{X})
	\wedge 
	( \mathbf{F}_{[6,9]} x \!\in\! \mathcal{H}_3 ) 
	\wedge
	(\mathbf{G}_{[10,11]} x \!\in\! \mathcal{X})
	\wedge 
	( \mathbf{G}_{[12,15]} x \!\in\! \mathcal{H}_4 )$ as  shown in Figure~\ref{fig:monitor}(b).
\end{myexm}

\subsection{Online Monitoring of STL}
Given a  state sequence $\mathbf{x}$, whose length is equal to or longer than the horizon of $\Phi$, we can always completely determine whether or not $\mathbf{x}\models \Phi$. 
However, during the operation of the system, at each time $k$, we can observe the current state $x_k$, and therefore,  only the partial signal    $\mathbf{x}_{0:k}=x_0x_1\cdots x_k$ (called \emph{prefix}) is available at time instant $k$, and the remaining signals $\mathbf{x}_{k+1: T_{\Phi}}$   (called \emph{suffix}) will only be available in the future. 
We say a prefix signal $\mathbf{x}_{0:k}$ is 
\begin{itemize}
    \item 
    \emph{violated} if   for any control input  $\mathbf{u}_{k: T_{\Phi}-1}$, we have 
$\mathbf{x}_{0:k} \xi_f(x_k, \mathbf{u}_{k:T_{\Phi}}-1) \not\models   \Phi$; 
    \item 
    \emph{feasible} if for some control input  $\mathbf{u}_{k:T_{\Phi}-1}$, we have 
$\mathbf{x}_{0:k} \xi_f(x_k, \mathbf{u}_{k:T_{\Phi}-1})  \models   \Phi$. 
\end{itemize}

Intuitively, a prefix signal is violated if we know for sure in advance that the formula will be violated inevitably. 
For example, for safety specification $\mathbf{G}_{[0,T]}x\!\in\! \mathcal{H}$, once the system reaches a state $x_k\!\notin\! \mathcal{H}$ for $k<T$, we know immediately that the formula is violated. 
Also, if the system is in state $x_k$ from which no solution $\xi_f(x_k,\mathbf{u}_{k:N-1})$ can be found such that each state is in region $\mathcal{H}$, then we can also claim the formula cannot be satisfied anymore. 

Therefore, an online  \emph{monitor} is a function 
\[
\mathcal{M}: \mathcal{X}^* \to \{0,1\}
\] 
that determines the satisfaction of formula based on the partial signal, where $\mathcal{X}^*$ denotes the set of all finite sequences over $\mathcal{X}$, ``$0$" denotes ``feasible" and ``$1$" denotes ``violated".   
Then the online monitoring problem is formulated as follows.  

\begin{myprob}
Given a dynamic system of form (\ref{eq:system}) and an STL formula $\Phi$ as in (\ref{eq:stl-seg}), design 
an online monitor $\mathcal{M}: \mathcal{X}^* \to \{0,1\}$ such that 
for any prefix signal $\mathbf{x}_{0:k}$ where $k \leq T_\Phi$, we have 
$\mathcal{M}(\mathbf{x}_{0:k}) = 1$ iff $\mathbf{x}_{0:k}$ is a violated prefix. 
\end{myprob}

\begin{remark}\label{remark:direct}
We note that, for any prefix signal $\mathbf{x}_{0:k}$,  it is a violated prefix iff we cannot find a sequence of control inputs $\mathbf{u}_{k:T_{\Phi}-1}$  
such that $\mathbf{x}_{0:k} \xi_{f}(x_k, \mathbf{u}_{k:T_{\Phi}-1})  \models \Phi$. 
The existence of such a control sequence can be determined by the binary encoding technique proposed in \cite{raman2014model}.  Therefore, a naive approach for designing  an online monitor is to solve the above constrained satisfaction problem based on $\mathbf{x}_{0:k}$. However, such a direct approach has the following issues 
\begin{itemize}
    \item 
    First, the computations are performed purely online by solving a satisfaction problem, which is computationally very challenging especially for nonlinear systems with long horizon STL formulae.  Consequently, the monitor may not be able to provide an evaluation in time.      
    \item 
    Second, it requires to  store the entire state sequence up to now. 
    It is more desirable if the monitor can just store the satisfaction status of the formula by ``forgetting" those irrelevant information. 
\end{itemize}
\end{remark}
 
Compared with the direct approach discussed in Remark~\ref{remark:direct}, in this paper, we will present an alternative approach by \emph{pre-computing} the set of feasible regions in an offline fashion. Then the pre-computed information will be used online, which ensures timely online evaluations.

\section{Feasible-Set-Based Online Monitoring} \label{sec:online}

\subsection{Subsequent Formulae and Feasible Set}
As we mentioned above, we aim to evaluate the satisfaction of STL formulae of the following form
\begin{equation}\label{eq:stl-seg2}
    \Phi =   \Phi_1^{[a_1, b_1]}\wedge \cdots \wedge \Phi_N^{[a_N, b_N]}. 
\end{equation}
Note that in  the online monitoring problem, once we detect a violated prefix, then the monitoring process is stopped.  
Then for each sub-formula $\Phi_i^{[a_i, b_i]}$, essentially, the monitor needs to determine the following two things within the current monitoring interval $[a_i,b_i]$:
\begin{itemize}
    \item 
    whether or not the current sub-formula has already been satisfied; 
    \item 
    whether or not the system is still able to fulfill the remaining  sub-formulae in the future. 
\end{itemize}

To capture the above issues, we introduce the notion of \emph{subsequent formulae} which is the conjunction of all sub-formulae starting from the current instant.

\begin{mydef}[Subsequent Formulae] 
Given an STL formula $\Phi$ of form~\eqref{eq:stl-seg2}, 
the subsequent formula  after instant $k$ is defined by 
\begin{equation}
\Phi_{sub}^{[k, T_\Phi]}   
=  \Phi_{i_k}^{[k, b_{i_k}]}\wedge\Phi_{i_k+1}^{[a_{i_k+1}, b_{i_k+1}]}
\cdots\wedge\Phi_{N}^{[a_{N}, b_{N}]},
\end{equation}	 
where $i_k = \min \{i \mid k \leq b_i \}$ means that instant $k$ belongs to the monitoring horizon of the $i_k$th sub-formula 
and 
$\Phi_{i_k}^{[k, b_{i_k}]}$ is obtained from 
$\Phi_{i_k}^{[a_{i_k}, b_{i_k}]}$ 
by replacing the starting instant of the temporal operator 
from  $a_{i_k}$ to $k$. \vspace{3pt} 
\end{mydef}

Subsequent formulae will be used when the current sub-formula $\Phi_{i_k}^{[a_{i_k}, b_{i_k}]}$ has not yet been accomplished. 
For temporal operators $\mathbf{F}$ and $\mathbf{U}'$, once the system reaches the target region at instant $k$, the current sub-formula has already been satisfied no matter what happens  between $[k,b_{i_k}]$. 
Then for this case, we just need to monitor the feasibility of the remaining sub-formulae from $\Phi_{i_k+1}^{[a_{i_k+1}, b_{i_k+1}]}$ to 
$\Phi_{N}^{[a_{N}, b_{N}]}$. This is captured by the notion of \emph{exclusive subsequent formulae}.
\begin{mydef}[Exclusive Subsequent Formulae] 
Given an STL formula $\Phi$ of form~\eqref{eq:stl-seg}, 
the exclusive subsequent formula  after instant $k$ is defined by 
\begin{equation}
\hat{\Phi}_{sub}^{[k, T_\Phi]}   
=  \Phi_{i_k+1}^{[a_{i_k+1}, b_{i_k+1}]}\wedge
\cdots\wedge\Phi_{N}^{[a_{N}, b_{N}]}. 
\end{equation}{\vspace{-8pt}}	  
\end{mydef}

Compared with the subsequent formula, the exclusive subsequent formula only excludes the part of interval $[k,b_{i_k}]$, i.e., $\Phi_{sub}^{[k, T_\Phi]} = \Phi_{i_k}^{[k, b_{i_k}]} \wedge  \hat{\Phi}_{sub}^{[k, T_\Phi]}$. 

In order to capture whether or not the (exclusive) subsequent formulae can possibly be fulfilled in the future under the constraint of the system dynamic, we introduce the notion of (exclusive) feasible set. 
\begin{mydef}[Feasible Set]\label{def:feasibleset}
Given an STL formula $\Phi$ of form~\eqref{eq:stl-seg2}, 
the feasible set at instant $k$, denoted by 
$\mathbb{X}_k\subseteq \mathcal{X}$, is the set of states from which there exists a solution that satisfies the subsequent formula $\Phi_{sub}^{[k,T_\Phi]}$, i.e.,
\begin{align}\label{eq:feasibleset}
	\mathbb{X}_k\! =\! \left\{
	x_k \in \mathcal{X} \,\middle\vert\, \!\!\!\!
	\begin{array}{cc}
		\exists \ \mathbf{u}_{k:T_\Phi-1} \in \mathcal{U}^{T_\Phi-k} \\
		\text{ s.t. } x_k \xi_f(x_{k}, \mathbf{u}_{k:T_\Phi-1}) \models \Phi_{sub}^{[k,T_\Phi]}
	\end{array} 
	\right\}.
\end{align}
Similarly,  the exclusive feasible set at instant $k$, denoted by 
$\hat{\mathbb{X}}_k\subseteq \mathcal{X}$ is the set of states from which there exists a solution that satisfies the  exclusive subsequent formula $\hat{\Phi}_{sub}^{[k,T_\Phi]}$.  
\end{mydef}

\begin{myexm}[Cont.]\label{exm:monitor} 
Let us consider the STL formula $\Phi$ in Equation~\eqref{eq:example}. 
For time instant $k=7$, we have $\mathcal{O}_7 = \mathbf{F}$. 
The subsequent formula is
	$\Phi^{[7,15]}_{sub} = ( \mathbf{F}_{[7,9]} x \!\in\! \mathcal{H}_3 ) 
	\wedge
	(\mathbf{G}_{[10,11]} x \!\in\! \mathcal{X})
	\wedge 
	( \mathbf{G}_{[12,15]} x \!\in\! \mathcal{H}_4 )$ 
	and exclusive subsequent formula is 
	$\hat{\Phi}_{sub}^{[7, 15]} = (\mathbf{G}_{[10,11]} x \!\in\! \mathcal{X})
	\wedge 
	( \mathbf{G}_{[12,15]} x \!\in\! \mathcal{H}_4 )$. 
\end{myexm}

In what follows, we will present the main online monitoring algorithm by using feasible sets. 
The  computation of  feasible sets $\mathbb{X}_k$ will be detailed in Section~\ref{sec:offline}.
Our approach is to first discuss the case of  $\mathcal{O}_k = \mathbf{G}$ and the case of $\mathcal{O}_k = \mathbf{F}$ or $\mathbf{U}'$ separately, since the former is a safety property while the latters are reachability properties. 
Then we will combine two cases together to present the main algorithm. 

\subsection{Case I: $\mathcal{O}_k = \mathbf{G}$}
Suppose that at time instant $k$, we have $\mathcal{O}_k = \mathbf{G}$, i.e., the current monitoring sub-formula is of form $\Phi_{i_k}^{[a_{i_k},b_{i_k}]}=\mathbf{G}_{[a_{i_k},b_{i_k}]}x\!\in\! \mathcal{H}_{i_k}$. 
For this case, the online monitor just needs to determine (i) whether or not the system is currently in $\mathcal{H}_{i_k}$;  and 
(ii) whether or not the subsequent formula can be satisfied. This information is completely characterized by the notion of  feasible set $\mathbb{X}_k$, which includes both the requirement for instant $k$ and the future. 
Hence, the monitor just needs to determine whether or not the observed state $x_k$ is in set $\mathbb{X}_k$.  Such a process is summarized by Procedure~1, where $\mathcal{M}_k$ is the abbreviation of   $\mathcal{M}(\mathbf{x}_{0:k})$.

\begin{algorithm}[ht]
    \SetAlgorithmName{Procedure}{}{}
	\caption{Case of  $\mathcal{O}_k = \mathbf{G}$}
	\KwIn{current state $x_k$}
	\If {$x_k \in \mathbb{X}_k $ }
	{
		$\mathcal{M}_k = 0$;
	}
	\Else
	{
		$\mathcal{M}_k = 1$; \\
		\textbf{return} ``\textit{prefix is violated}'';
	}
\end{algorithm}

\subsection{Case II: $\mathcal{O}_k = \mathbf{F}$ or $\mathcal{O}_k = \mathbf{U}'$}
The case of $\mathcal{O}_k = \mathbf{F}$ or $\mathcal{O}_k = \mathbf{U}'$ is different from the case of $\mathcal{O}_k = \mathbf{G}$.  
For example, at instant $7$ in  Example~\ref{exm:monitor}, the sub-formula under monitoring is 
$\Phi_{4}^{[a_4,b_4]}=\mathbf{F}_{[6,9]}x\in \mathcal{H}_3$. 
If we already have that $x_6 \in \mathcal{H}_3$, then this sub-formula is already satisfied. 
Therefore, there is no need to require that states within interval $[7,9]$ are still in $\mathcal{H}_3$ anymore. 
Instead, we just need to focus on whether or not the exclusive sub-formulae $\hat{\Phi}_{sub}^{[k,T_\Phi]}$ can be satisfied.  
 
Based on the above discussion, we propose monitoring process in Procedure~2 for the case of 
$\Phi_i^{[a_i,b_i]}=\mathbf{F}_{[a_i,b_i]}x\!\in\! \mathcal{H}_i$ 
or the case of 
$\Phi_i^{[a_i,b_i]}=x\!\in\! \mathcal{H}_i^1\mathbf{U}'_{[a_i,b_i]}x\!\in\! \mathcal{H}_i^2$. 
Here, for the $i$th formula, we introduce a global Boolean variable $d_i$ to  remember whether or not the $i$th sub-formula has already been satisfied. 
This is determined in line~1, where the variable is set to be $\textsf{true}$ if the target region is visited. 
Depending on the truth value of variable $d_i$, the monitor will take different actions. 
In line~3, when  $d_i = \textsf{false}$,  which means that $\Phi_i^{[a_i, b_i]}$ has not been satisfied, the monitor still needs to monitor the subsequent formula including the remaining part of the current sub-formula,  which is captured by  $\mathbb{X}_k$.
In line~9,  when  $d_i = \textsf{true}$,  which means that $\Phi_i^{[a_i, b_i]}$ has  already been satisfied, the monitor does not need to  monitor  the current sub-formula anymore. 
Instead, it will just focus on the feasibility of the exclusive subsequent formula, which is captured by  $\hat{\mathbb{X}}_k$.

\begin{algorithm}[ht]
    \SetAlgorithmName{Procedure}{}{}
	\caption{Case II: $\mathcal{O}_k = \mathbf{F}$ or $\mathcal{O}_k = \mathbf{U}'$}
	\KwIn{current state $x_k$}
	\If {\emph{[$\mathcal{O}_k \!=\! \mathbf{F}\wedge x_k \in \mathcal{H}_i$]} 
	or \emph{[$ \mathcal{O}_k \!=\! \mathbf{U}'\wedge x_k \in \mathcal{H}_{i}^1 \cap \mathcal{H}_{i}^2$]}}
	{
		set $d_i \gets \textsf{true}$
	}
	
	\If {\emph{$d_i = \textsf{false}$}}
	{
		\If {$x_k \in \mathbb{X}_k $ }
		{
			$\mathcal{M}_k = 0$
		}
		\Else
		{
			$\mathcal{M}_k = 1$ \\
			\textbf{return} ``\textit{prefix is violated}''
		}
	}
	\If {\emph{$d_i = \textsf{true}$}}
	{
		\If {$x_k \in \hat{\mathbb{X}}_k $ }
		{
			$\mathcal{M}_k = 0$
		}
		\Else
		{
			$\mathcal{M}_k = 1$ \\
			\textbf{return} ``\textit{prefix is violated}''
		}
	}

\end{algorithm}

\subsection{Online Monitoring Algorithm}
Based on the above two procedures, we present the complete online monitoring algorithm in Algorithm~1. 
We start from the initial instant $k=0$ and the first sub-formula $i=1$. The satisfaction variables $d_i$ are initialized as \textsf{false} for all sub-formulae $i=1,\dots, N$. 
For for each instant, 
the monitor will obtain new state $x_k$ (Line~4) and use different procedures according to different cases (Lines~5-8).
This process is repeated until the last time instant or 
a violated prefix is detected.

\begin{algorithm}[ht]
	\setcounter{algocf}{0}
	\caption{Online Monitoring Algorithm}
	\KwIn{feasible set $\mathbb{X}_k, \hat{\mathbb{X}}_k$}
	\KwOut{monitoring decision $\mathcal{M}_k$}
	$k \gets 0, i \gets 1 $ \\
	$d_i \gets \textsf{false}, i \in  \{1, \dots, N\} $ \\
	\While{$k \leq T_\Phi$}
	{
		obtain new state $x_k$ \\
		\If {$\mathcal{O}_k = \mathbf{G}$}
		{
			\textbf{Procedure 1}
		}
		\ElseIf {$\mathcal{O}_{k} = \mathbf{F}$ \text{or} $\mathcal{O}_{k} = \mathbf{U}'$}
		{
			\textbf{Procedure 2}
		}
		$k \gets k+1$ \\
		$i \gets \min\{i \mid k \leq b_i\}$ \\ 
	}
	
\end{algorithm}

\begin{remark}
Compared with the direct  approach discussed in Remark~\ref{remark:direct}, the major  advantage of the proposed online monitoring algorithm is  that the online computation burden is very low.  At each time instant, instead of solving a complicated satisfaction problem on-the-fly, our approach just needs to check a set membership. The (exclusive) feasible sets can be computed in an offline fashion and stored in the monitor. Additionally, our algorithm is only based on the current state $x_k$ and do not need to remember the entire trajectory generated by the system.
\end{remark}

\section{Offline Computation  of Feasible Sets}\label{sec:offline}
In this section, we present methods for the computation of (exclusive) feasible sets for each time instant $k$.  
The basis idea is to compute feasible sets recursively in a backwards manner. Specifically,  suppose that we already know the feasible set $\mathbb{X}_{k+1}$, and then we can use $\mathbb{X}_{k+1}$ to compute $\mathbb{X}_{k}$. The specific  computation depends on the current temporal operator that applies, i.e., $\mathcal{O}_k$ is $\mathbf{G}$, $\mathbf{F}$ or $\mathbf{U}'$. In the followings, we will first discuss each case separately and then present the complete algorithm. 

\subsection{Computation of Feasible Sets for $\mathbf{G}$}\label{sec:always}
Suppose that, at time instant $k$, the current monitoring sub-formula is $\Phi_{i_k}^{[a_{i_k},b_{i_k}]}=\mathbf{G}_{[a_{i_k},b_{i_k}]}x\!\in\!\mathcal{H}_{i_k}$. 
If we know that the feasible set for the next instant $\mathbb{X}_{k+1}$ is given, then we can compute the feasible set $\mathbb{X}_{k}$ for the current instant $k$ as the set of states such that 
\begin{itemize}
    \item[(i)] 
    they are in $\mathcal{H}$; and 
    \item[(ii)] 
    they can reach $\mathbb{X}_{k+1}$ in one step under some inputs.
\end{itemize}
This observation is formalized by the \emph{$\mathcal{H}$-one-step set} defined as follows. 

\begin{mydef}[$\mathcal{H}$-One-Step Set]
Let $\mathcal{S}\subseteq \mathcal{X}$ be a set of states representing the ``target region" and $\mathcal{H}\subseteq \mathcal{X}$ be a set of states representing the ``safe region". 
Then the  $\mathcal{H}$-one-step set of $\mathcal{S}$ is defined by 
	\begin{equation}\label{eq:H-one-step}
		\Upsilon_{\mathcal{H}}(\mathcal{S}) = \{x \in \mathcal{H} \mid \exists u \in \mathcal{U}  \text{ s.t. }  f(x, u) \in \mathcal{S}\}. 
	\end{equation}
When $\mathcal{H}=\mathcal{X}$, 
$\Upsilon_{\mathcal{H}}(\mathcal{S})$ is simplified as $\Upsilon(\mathcal{S})$, which is referred to as the one-step set directly.
\end{mydef}

Using the above notation, 
if $\mathcal{O}_k=\mathbf{G}$, then we know that 
\begin{equation}\label{eq:comp-alw}
\mathbb{X}_{k} = \Upsilon_{\mathcal{H}_{i_k}}(\mathbb{X}_{k+1}),
\end{equation}
where $\mathcal{H}_{i_k}$ is the region in which the system should stay during the $i_k$th sub-formula. 

For the sake of convenience, we define 
operator $\Upsilon_{\mathcal{H}}^{(j)}(\mathcal{S})$ inductively by:
\begin{itemize}
    \item
  $\Upsilon^{1}_{\mathcal{H}}(\mathcal{S}) = \Upsilon_{\mathcal{H}}(\mathcal{S})$; and 
  \item 
      $\Upsilon^{(j)}_{\mathcal{H}}(\mathcal{S}) = \Upsilon_{\mathcal{H}}(\Upsilon^{(j-1)}_{\mathcal{H}}(\mathcal{S}))$. 
\end{itemize}
Intuitively, $\Upsilon^{(j)}_{\mathcal{H}}(\mathcal{S})$
is the set of states which can reach region $\mathcal{S}$ in exactly $j$ steps only via states in region $\mathcal{H}$. 

Now, suppose that at instant $k\in [a_{i_k},b_{i_k}]$, we have already computed the feasible set for the starting instant of  next sub-formula, i.e., $\mathbb{X}_{b_{i_k}+1}$. 
Then we have 
\begin{equation}
\mathbb{X}_{k} = \Upsilon_{\mathcal{H}_{i_k}}^{(b_{i_k}+1-k)}(\mathbb{X}_{b_{i_k}+1}).
\end{equation}
Therefore, starting from $\mathbb{X}_{b_{i_k}+1}$, 
all feasible sets within horizon $ [a_{i_k},b_{i_k}]$ can be computed in backwards by applying the $\mathcal{H}$-one-step set operator $\Upsilon_{\mathcal{H}_{i_k}}(\cdot)$ for $b_{i_k}-a_{i_k}+1$ times. 

The exclusive feasible set for each instant $k\in  [a_{i_k},b_{i_k}]$ can be computed analogously. Specifically, 
we just need to replace restricted operator
$\Upsilon_{\mathcal{H}_{i_k}}(\cdot)$
by unrestricted operator $\Upsilon(\cdot)$
and we have
\begin{equation}\label{eq:comp-exl}
\hat{\mathbb{X}}_k=\Upsilon^{(b_{i_k}+1-k)}(\mathbb{X}_{b_{i_k}+1}).
\end{equation}

\subsection{Computation of Feasible Sets for $\mathbf{F}$}\label{sec:eventually}
Now, suppose that, at time instant $k$, the current monitoring sub-formula is $\Phi_{i_k}^{[a_{i_k},b_{i_k}]}=\mathbf{F}_{[a_{i_k},b_{i_k}]}x\!\in\!\mathcal{H}_{i_k}$. 
Also, we assume that the feasible set for the next instant $\mathbb{X}_{k+1}$ has already been computed. Then we know that, when $k \neq b_{i_k}$, a state belongs to feasible set $\mathbb{X}_k$  if \emph{one of} the following two cases holds:
\begin{itemize}
    \item[(i)] 
    it is already in the target region $\mathcal{H}_{i_k}$ for the current monitoring sub-formula, and can continue  to accomplish the exclusive sub-formulae; or
    \item[(ii)]
    it  can reach $\mathbb{X}_{k+1}$ in one step, which means that the task of reaching the target region is postponed to the future instants (no matter it is currently already in $\mathcal{H}_{i_k}$ or not). 
\end{itemize}
Note that, when $k = b_{i_k}$, a state belongs to feasible set $\mathbb{X}_{k}$ only when the first case holds since this is already the last chance to reach target region $\mathcal{H}_{i_k}$.

States satisfying the first case can be characterized as 
$\mathcal{H}_{i_k}\cap \hat{\mathbb{X}}_k$, and  states satisfying the second case can be simply characterized as $\Upsilon(\mathbb{X}_{k+1})$. 
Since a feasible state can be either case, we take the union of these two sets and we have 
\begin{align}\label{eq:comp-eve}
  \mathbb{X}_{k}=\left\{
  \begin{array}{ll}
       (\mathcal{H}_{i_k}\cap \hat{\mathbb{X}}_k) \cup \Upsilon(\mathbb{X}_{k+1}) ,   & \text{if }k \neq b_{i_k}\\
       \mathcal{H}_{i_k}\cap \hat{\mathbb{X}}_k, & \text{if }k = b_{i_k}
  \end{array}
  \right.
\end{align}
where, recalled that, the exclusive feasible set $\hat{\mathbb{X}}_k$ can be computed according to Equation~\eqref{eq:comp-exl}. 
The following result establishes the correctness of the above computation of feasible sets for the case of $\mathbf{F}$. 

\begin{mypro}
Suppose that the current monitoring sub-formula is  $\Phi_{i_k}^{[a_{i_k},b_{i_k}]}=\mathbf{F}_{[a_{i_k},b_{i_k}]}x\!\in\!\mathcal{H}_{i_k}$ and  $\mathbb{X}_{k+1}$ is the feasible set at next time instant. 
Then $\mathbb{X}_k$ computed by Eq.~\eqref{eq:comp-eve} is indeed the feasible set  for the   time instant $k$. 
\end{mypro}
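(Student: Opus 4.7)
The plan is to establish set equality by proving the two inclusions $\mathbb{X}_k \supseteq \text{RHS}$ and $\mathbb{X}_k \subseteq \text{RHS}$ of Eq.~\eqref{eq:comp-eve}, based directly on Definition~\ref{def:feasibleset}. The main observation is that since $\mathcal{O}_k = \mathbf{F}$, the subsequent formula decomposes as
\[
\Phi_{sub}^{[k,T_\Phi]} \;=\; \bigl(\mathbf{F}_{[k,b_{i_k}]}\, x\!\in\!\mathcal{H}_{i_k}\bigr)\wedge \hat{\Phi}_{sub}^{[k,T_\Phi]},
\]
and, crucially, $\hat{\Phi}_{sub}^{[k,T_\Phi]}$ imposes constraints only on states from instant $b_{i_k}+1$ onward, so its satisfaction status is independent of which instant $k'\in[k,b_{i_k}]$ is used to witness the eventually.

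For the inclusion $\text{RHS}\subseteq\mathbb{X}_k$, I would split on the two disjuncts. If $x_k\in\mathcal{H}_{i_k}\cap\hat{\mathbb{X}}_k$, pick the control sequence $\mathbf{u}_{k:T_\Phi-1}$ witnessing $x_k\in\hat{\mathbb{X}}_k$; this satisfies $\hat{\Phi}_{sub}^{[k,T_\Phi]}$, and choosing the witness instant $k'=k$ for the eventually works since $x_k\in\mathcal{H}_{i_k}$, so $x_k\in\mathbb{X}_k$. If instead $k\neq b_{i_k}$ and $x_k\in\Upsilon(\mathbb{X}_{k+1})$, pick $u_k\in\mathcal{U}$ with $x_{k+1}:=f(x_k,u_k)\in\mathbb{X}_{k+1}$, and extend with any witness $\mathbf{u}_{k+1:T_\Phi-1}$ for $x_{k+1}\in\mathbb{X}_{k+1}$. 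Since $\Phi_{sub}^{[k+1,T_\Phi]}$ forces some $k'\in[k+1,b_{i_k}]\subseteq[k,b_{i_k}]$ with $x_{k'}\in\mathcal{H}_{i_k}$ and the exclusive tail coincides with $\hat{\Phi}_{sub}^{[k,T_\Phi]}$, the concatenated control sequence witnesses $x_k\in\mathbb{X}_k$.

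For the inclusion $\mathbb{X}_k\subseteq\text{RHS}$, let $\mathbf{u}_{k:T_\Phi-1}$ be a witnessing control sequence for $x_k\in\mathbb{X}_k$ and let $k'\in[k,b_{i_k}]$ be an instant at which $x_{k'}\in\mathcal{H}_{i_k}$. If $k'=k$, then $x_k\in\mathcal{H}_{i_k}$ and the very same control sequence also witnesses $x_k\in\hat{\mathbb{X}}_k$ because $\hat{\Phi}_{sub}^{[k,T_\Phi]}$ is a syntactic conjunct of $\Phi_{sub}^{[k,T_\Phi]}$. Otherwise $k'>k$, which in particular rules out $k=b_{i_k}$; then $\mathbf{u}_{k+1:T_\Phi-1}$ applied from $x_{k+1}=f(x_k,u_k)$ witnesses $x_{k+1}\models \Phi_{sub}^{[k+1,T_\Phi]}$, i.e., $x_{k+1}\in\mathbb{X}_{k+1}$, so $x_k\in\Upsilon(\mathbb{X}_{k+1})$.

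\textbf{Main obstacle.} Nothing deep is at stake; the argument is essentially an unfolding of definitions. The one place where care is required is the case split on the witness instant $k'$: I must check that the two cases cover exactly the two disjuncts of Eq.~\eqref{eq:comp-eve}, and that when $k=b_{i_k}$ the only possibility is $k'=k$, which is why the second disjunct $\Upsilon(\mathbb{X}_{k+1})$ legitimately disappears in that sub-case. Another minor subtlety is confirming that $\hat{\Phi}_{sub}^{[k,T_\Phi]}$ and $\hat{\Phi}_{sub}^{[k+1,T_\Phi]}$ denote the same subformula (both equal $\Phi_{i_k+1}^{[a_{i_k+1},b_{i_k+1}]}\wedge\cdots\wedge\Phi_N^{[a_N,b_N]}$ since $k,k+1\in[a_{i_k},b_{i_k}]$), so that the recursive use of $\mathbb{X}_{k+1}$ is semantically consistent with the tail constraints appearing in $\mathbb{X}_k$.
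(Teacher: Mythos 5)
Your proposal is correct and follows essentially the same route as the paper's own proof: both handle $k=b_{i_k}$ separately, split the eventually into the disjunction ``witnessed at $k$'' versus ``witnessed in $[k+1,b_{i_k}]$'', use the identity $\hat{\Phi}_{sub}^{[k,T_\Phi]}=\hat{\Phi}_{sub}^{[k+1,T_\Phi]}$, and identify the two disjuncts with $\mathcal{H}_{i_k}\cap\hat{\mathbb{X}}_k$ and $\Upsilon(\mathbb{X}_{k+1})$ respectively. The only difference is one of rigor, not of method: you carry out the two set inclusions explicitly by constructing and truncating/concatenating control sequences, whereas the paper states the correspondence between each disjunct and its set more tersely.
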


\begin{proof}
When $k = b_{i_k}$, clearly we know that $\mathbb{X}_{k} = \mathcal{H}_{i_k}\cap \hat{\mathbb{X}}_k$ is the feasible set.
For the case of $k \neq b_{i_k}$, since $\mathbb{X}_{k+1}$ is assumed to be the feasible set for instant $k+1$, by definition,  for any state $x_{k+1} \in \mathbb{X}_{k+1}$, there exists $\mathbf{u}_{k+1: T_\Phi-1}$ such that $x_{k+1} \xi_f(x_{k+1}, \mathbf{u}_{k+1: T_\Phi-1}) \models \Phi_{sub}^{[k+1,T_\Phi]}$
where $\Phi_{sub}^{[k+1,T_\Phi]} = \mathbf{F}_{[k+1,b_{i_k}]}x\!\in\!\mathcal{H}_{i_k} \wedge \hat{\Phi}_{sub}^{[k+1, T_\Phi]}$.
At instant $k$, the feasible set $\mathbb{X}_k$ is the set of states from which there exists $\mathbf{u}_{k: T_\Phi-1}$ such that $x_{k} \xi_f(x_{k}, \mathbf{u}_{k: T_\Phi-1}) \models \Phi_{sub}^{[k,T_\Phi]}$, where
\begin{align}
    & \Phi_{sub}^{[k,T_\Phi]} 
    =\mathbf{F}_{[k,b_{i_k}]}x\!\in\!\mathcal{H}_{i_k} \wedge \hat{\Phi}_{sub}^{[k, T_\Phi]} \nonumber \\
    & = (\mathbf{F}_{[k,k]}x\!\in\!\mathcal{H}_{i_k} \wedge \hat{\Phi}_{sub}^{[k, T_\Phi]}) \vee 
    (\mathbf{F}_{[k+1,b_{i_k}]}x\!\in\!\mathcal{H}_{i_k} \wedge \hat{\Phi}_{sub}^{[k, T_\Phi]}) \nonumber \\
    &= (\mathbf{F}_{[k,k]}x\!\in\!\mathcal{H}_{i_k} \wedge \hat{\Phi}_{sub}^{[k, T_\Phi]}) \vee \Phi_{sub}^{[k+1,T_\Phi]}, \nonumber
\end{align}
and $\hat{\Phi}_{sub}^{[k, T_\Phi]}= \hat{\Phi}_{sub}^{[k+1, T_\Phi]}$. 
For the constraint $x_{k} \xi_f(x_{k}, \mathbf{u}_{k: T_\Phi-1}) \models \mathbf{F}_{[k,k]}x\!\in\!\mathcal{H}_{i_k} \wedge \hat{\Phi}_{sub}^{[k, T_\Phi]}$, 
according to the definition of exclusive feasible set, it requires $x_k \!\in\! \mathcal{H}_{i_k}\cap \hat{\mathbb{X}}_k$, while $x_{k} \xi_f(x_{k}, \mathbf{u}_{k: T_\Phi-1}) \models \Phi_{sub}^{[k+1,T_\Phi]}$ requires that $x_k \!\in\! \Upsilon(\mathbb{X}_{k+1})$.
In conclusion, for the case of $k \neq b_{i_k}$,  we have $\mathbb{X}_k = (\mathcal{H}_{i_k}\cap \hat{\mathbb{X}}_k) \cup \Upsilon(\mathbb{X}_{k+1})$.
\end{proof}

\subsection{Computation of Feasible Sets for $\mathbf{U}'$}\label{sec:until}
The case of ``until" is similar to the case of ``eventually". 
Specifically, once again, suppose that, at time instant $k$, the current monitoring sub-formula is $\Phi_{i_k}^{[a_{i_k},b_{i_k}]}=(x \!\in\! \mathcal{H}^1_{i_k}) $ $\mathbf{U}'_{[a_{i_k},b_{i_k}]} (x \!\in\! \mathcal{H}^2_{i_k})$, and 
 the feasible set for the next instant $\mathbb{X}_{k+1}$ is known. 
Then, when $k \neq b_{i_k}$, a state belongs to feasible set $\mathbb{X}_k$  if \emph{one of} the following two cases holds:
\begin{itemize}
    \item[(i)] 
    it is currently in the both regions $\mathcal{H}_{i_k}^1$ and $\mathcal{H}_{i_k}^2$ meaning that the current monitoring sub-formula has already been satisfied, and it still can continue  to accomplish the exclusive sub-formulae; or
    \item[(ii)]
    it can reach $\mathbb{X}_{k+1}$ in one step but only through states in region 
    $\mathcal{H}_{i_k}^1$, which means that $\mathcal{H}_{i_k}^2$ needs to be visited in the future and therefore, the system still needs to stay in $\mathcal{H}_{i_k}^1$. 
\end{itemize}
Also, when $k = b_{i_k}$, a state belongs to feasible set $\mathbb{X}_{k}$ only when  the first case holds since it is already the last time instant for the current sub-formula.

Then similar to Equation~\eqref{eq:comp-eve}, we can also write the feasible set within the horizon of operator $\mathbf{U}'$ as 
\begin{align}\label{eq:comp-unt}
  & \mathbb{X}_{k}= \\
  & \left\{
  \begin{array}{ll}
       (\mathcal{H}_{i_k}^1\cap \mathcal{H}_{i_k}^2\cap \hat{\mathbb{X}}_k) \cup \Upsilon_{\mathcal{H}_{i_k}^1}(\mathbb{X}_{k+1}),   &\text{if } k \neq b_{i_k}\\
       \mathcal{H}_{i_k}^1\cap \mathcal{H}_{i_k}^2\cap \hat{\mathbb{X}}_k. &\text{if } k = b_{i_k}
  \end{array}
  \right. \nonumber 
\end{align}
Also,  the following result establishes the correctness of the above computation of feasible sets for the case of $\mathbf{U}'$. 

\begin{mypro}
Suppose that the current monitoring sub-formula is  $\Phi_{i_k}^{[a_{i_k},b_{i_k}]}=(x \!\in\! \mathcal{H}^1_{i_k}) \mathbf{U}'_{[a_{i_k},b_{i_k}]} (x \!\in\! \mathcal{H}^2_{i_k})$
and  $\mathbb{X}_{k+1}$ is the feasible set at next time instant. 
Then $\mathbb{X}_k$ computed by Eq.~\eqref{eq:comp-unt} is indeed the feasible set  for the   time instant $k$.  
\end{mypro}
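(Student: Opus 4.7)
The plan is to mirror the argument of the previous proposition for $\mathbf{F}$, exploiting the disjunctive structure that the semantics of $\mathbf{U}'$ induces on the subsequent formula $\Phi_{sub}^{[k,T_\Phi]}$. I will split into the boundary case $k=b_{i_k}$ and the generic case $k\neq b_{i_k}$, and in each case translate the semantic decomposition into a set-theoretic identity using the definitions of $\hat{\mathbb{X}}_k$ and $\Upsilon_{\mathcal{H}_{i_k}^1}(\cdot)$.

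For the boundary case $k=b_{i_k}$, the semantics of $\mathbf{U}'$ forces the witness $k'\in[k,b_{i_k}]$ to be $k'=k$, so the only way to satisfy $\Phi_{sub}^{[k,T_\Phi]}$ is $x_k\in\mathcal{H}_{i_k}^1\cap\mathcal{H}_{i_k}^2$ (the $\mathcal{H}_{i_k}^1$ requirement comes from the ``always before'' clause evaluated on the singleton $\{k\}$), combined with the feasibility of the remaining exclusive sub-formulae from $k$ onward, i.e.\ $x_k\in\hat{\mathbb{X}}_k$. This matches the second branch of Eq.~\eqref{eq:comp-unt}.

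For $k<b_{i_k}$, I would write the equivalence
\begin{align*}
\Phi_{sub}^{[k,T_\Phi]}
&= \bigl(x\!\in\!\mathcal{H}_{i_k}^1\wedge\mathbf{F}_{[k,k]}x\!\in\!\mathcal{H}_{i_k}^2 \wedge \hat{\Phi}_{sub}^{[k,T_\Phi]}\bigr)\\
&\quad\vee\bigl(x\!\in\!\mathcal{H}_{i_k}^1 \wedge \Phi_{sub}^{[k+1,T_\Phi]}\bigr),
\end{align*}
where the first disjunct corresponds to choosing the witness $k'=k$ and the second to choosing some $k'\in[k+1,b_{i_k}]$, in which case the ``always before'' clause of $\mathbf{U}'$ still requires $x_k\in\mathcal{H}_{i_k}^1$ and the remainder becomes precisely $\Phi_{sub}^{[k+1,T_\Phi]}$. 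Using $\hat{\Phi}_{sub}^{[k,T_\Phi]}=\hat{\Phi}_{sub}^{[k+1,T_\Phi]}$ (the exclusive part does not depend on $k$ within the current interval), the first disjunct translates to $x_k\in\mathcal{H}_{i_k}^1\cap\mathcal{H}_{i_k}^2\cap\hat{\mathbb{X}}_k$ by the definition of the exclusive feasible set, and the second disjunct translates to $x_k\in\mathcal{H}_{i_k}^1$ together with $\exists u\in\mathcal{U}$ such that $f(x_k,u)\in\mathbb{X}_{k+1}$, which is exactly $x_k\in\Upsilon_{\mathcal{H}_{i_k}^1}(\mathbb{X}_{k+1})$ by Eq.~\eqref{eq:H-one-step}. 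Taking the union of the two disjuncts yields the first branch of Eq.~\eqref{eq:comp-unt}.

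The main obstacle is to verify carefully that the disjunctive rewriting of $\Phi_{sub}^{[k,T_\Phi]}$ is semantically tight, particularly that the ``$\mathcal{H}_{i_k}^1$ throughout'' clause in the until operator is faithfully captured by restricting the one-step set to $\mathcal{H}_{i_k}^1$, and that nothing is double-counted or missed at the splitting instant $k$. Once the decomposition is justified, the set-theoretic translation is immediate from Definition~\ref{def:feasibleset} and the induction hypothesis that $\mathbb{X}_{k+1}$ correctly represents the feasibility at the next instant.
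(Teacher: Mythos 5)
Your proof is correct and follows essentially the same route as the paper's: split on $k=b_{i_k}$ versus $k\neq b_{i_k}$, rewrite the until in $\Phi_{sub}^{[k,T_\Phi]}$ as a disjunction between ``witness at $k$'' and ``witness in $[k+1,b_{i_k}]$'', and translate the two disjuncts into $\mathcal{H}_{i_k}^1\cap\mathcal{H}_{i_k}^2\cap\hat{\mathbb{X}}_k$ and $\Upsilon_{\mathcal{H}_{i_k}^1}(\mathbb{X}_{k+1})$ respectively. If anything, your decomposition is slightly more careful than the paper's, since you carry the conjunct $x\!\in\!\mathcal{H}_{i_k}^1$ explicitly in the second disjunct, whereas the paper writes that disjunct as just $\Phi_{sub}^{[k+1,T_\Phi]}$ and only reinstates the $\mathcal{H}_{i_k}^1$ requirement implicitly when passing to the restricted one-step set.
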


\begin{proof}
It is obvious that when $k = b_{i_k}$, $\mathbb{X}_{k} = \mathcal{H}_{i_k}^1\cap \mathcal{H}_{i_k}^2\cap \hat{\mathbb{X}}_k$ is the feasible set.
For the case of $k \neq b_{i_k}$, since $\mathbb{X}_{k+1}$ is assumed to be the feasible set at instant $k+1$,   for any state $x_{k+1} \in \mathbb{X}_{k+1}$, there exists $\mathbf{u}_{k+1: T_\Phi-1}$ such that $x_{k+1} \xi_f(x_{k+1}, \mathbf{u}_{k+1: T_\Phi-1}) \models \Phi_{sub}^{[k+1,T_\Phi]}$
where $\Phi_{sub}^{[k+1,T_\Phi]} = (x \!\in\! \mathcal{H}^1_{i_k}) \mathbf{U}'_{[k+1,b_{i_k}]} (x \!\in\! \mathcal{H}^2_{i_k}) \wedge \hat{\Phi}_{sub}^{[k+1, T_\Phi]}$.
At instant $k$, the feasible set $\mathbb{X}_k$ is the set of states from which there exists $\mathbf{u}_{k: T_\Phi-1}$ such that $x_{k} \xi_f(x_{k}, \mathbf{u}_{k: T_\Phi-1}) \models \Phi_{sub}^{[k,T_\Phi]}$, where
\begin{align}
    & \Phi_{sub}^{[k,T_\Phi]} 
    = (x \!\in\! \mathcal{H}^1_{i_k}) \mathbf{U}'_{[k,b_{i_k}]} (x \!\in\! \mathcal{H}^2_{i_k}) \wedge \hat{\Phi}_{sub}^{[k, T_\Phi]} \nonumber \\
    & = \Big((x \!\in\! \mathcal{H}^1_{i_k}) \mathbf{U}'_{[k,k]} (x \!\in\! \mathcal{H}^2_{i_k}) \wedge \hat{\Phi}_{sub}^{[k, T_\Phi]}\Big) \vee \Phi_{sub}^{[k+1,T_\Phi]}. \nonumber
\end{align}
For the first constraint, it requires $x_k \!\in\! \mathcal{H}_{i_k}^1\cap \mathcal{H}_{i_k}^2\cap \hat{\mathbb{X}}_k$, while the second requires that $x_k \!\in\! \Upsilon_{\mathcal{H}_{i_k}^1}(\mathbb{X}_{k+1})$.
In conclusion, for the case of $k \neq b_{i_k}$, we have  $\mathbb{X}_k = (\mathcal{H}_{i_k}^1\cap \mathcal{H}_{i_k}^2\cap \hat{\mathbb{X}}_k) \cup \Upsilon_{\mathcal{H}_{i_k}^1}(\mathbb{X}_{k+1})$.
\end{proof}

\subsection{Offline Computation Algorithm}\label{sec:offline_alg}
Finally, we summarize the complete procedure for computing all (exclusive) feasible sets within the entire horizon of the formula by combining different cases presented in the previous subsections. The complete process is given by Algorithm~2. 
The iteration starts from the last instant $k=T_\Phi$ for the last sub-formula $i_k = N$  with  a pseudo feasible set $\mathbb{X}_{T_\Phi+1}=\mathbb{R}^n$. 
For each time instant $k$, we compute the exclusive feasible set $\hat{\mathbb{X}}_k$ using the same approach (line~4). The feasible set ${\mathbb{X}}_k$ is computed according to the different cases of $\mathcal{O}_k$ (lines 5-10).  
This process is repeated until iterating to the first time instant $k=0$.
\begin{algorithm}[ht]\label{Algorithm}
	\caption{Computations of All Feasible Sets}
	\KwIn{STL formula $\Phi = \bigwedge_{i=1}^{N} \Phi^{[a_i, b_i]}$}
	\KwOut{all (exclusive) feasible sets $\{\mathbb{X}_k\}$ and $\{\hat{\mathbb{X}}_k\}$}
	$\mathbb{X}_{T_\Phi+1} \gets \mathcal{X}$ \\
	$k \gets T_\Phi, i_k \gets N$ \\
	\While{$k \geq 0$}
	{   
		Compute $\hat{\mathbb{X}}_k$ by Equation~\eqref{eq:comp-exl}\\
	    \If{$\mathcal{O}_{k} = \mathbf{G}$}
	    {
	        Compute $\mathbb{X}_k$ by Equation~\eqref{eq:comp-alw}
	    }
	    \ElseIf{$\mathcal{O}_{k} = \mathbf{F}$}
	    {
	        Compute $\mathbb{X}_k$ by Equation~\eqref{eq:comp-eve}
	    }
	    \ElseIf{$\mathcal{O}_{k} = \mathbf{U}'$}
	    {
	        Compute $\mathbb{X}_k$ by Equation~\eqref{eq:comp-unt}
	    }
		$k \gets k-1$; \\ 
		$i_k \gets \min\{i_k \mid k \leq b_{i_k}\}$;
	}
\end{algorithm}

\subsection{Numerical Computation Considerations}\label{sec:appro}
Finally, we conclude this section by discussing some considerations in the numerical computation of feasible sets.  In order to realize Algorithm~2, the key is to compute the ($\mathcal{H}$-)one-step set  $\Upsilon_{(\mathcal{H})}(\cdot)$. In general, there is no close-form expression for such sets and the computation highly depends on the dynamic of the system.  Particularly, (inner or outer) approximation methodologies have been widely used in practice to achieve the trade off between the  computational  accuracy and complexity.
For example, for linear systems, computation methods for one-step set have been presented subject to polytopic constraints described by linear differential inclusions or for piece-wise affine systems; see, e.g.,  \cite{blanchini1994ultimate, kerrigan2001robust, mayne2001control}. For general nonlinear systems, however, computing the one-step set precisely is much more challenging. For example, \cite{bravo2005computation} proposed a branch and bound algorithm with interval arithmetic approach which provides an inner approximation with a given bound of the error. 

In terms of our monitoring algorithm, it is worth remarking that, if we compute feasible set $\mathbb{X}_k$ by outer-approximations, then  miss-alarms may be possible since we allows some states that are not actually feasible. On the other hand, if we compute feasible set $\mathbb{X}_k$ by inner-approximations, then false-alarms may be possible since the computed feasible sets are conservative. For safety-critical systems,  it is more meaningful to use  inner-approximate of feasible sets to violate miss-alarms. 

Regarding the computation complexity, the overall complexity for computing all feasible sets grows linearly when the horizon of the entire formulae increases. However, for each step in the iteration,  the complexity for computing the one-step sets for constrained systems largely depends on the system model and increases exponentially with the order of the system.  Nevertheless, it is worth  mentioning that computations of feasible sets are purely offline, which does not affect the complexity of the online execution of the monitoring algorithm.

\section{Case Studies for Online Monitoring}\label{sec:case}

In this section, we illustrate our online monitoring algorithm with two examples. 
We show that,  by leveraging the model information of the dynamic system, our model-based approach may provide better monitoring evaluations compared with purely model-free approaches. 

\subsection{Case Study I}

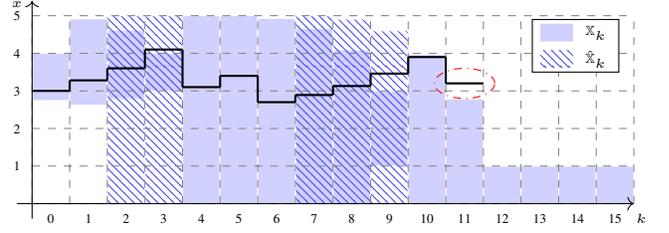
\begin{figure}[t]
	\centering
\usetikzlibrary {patterns.meta}

\def\unit{0.5}

\begin{tikzpicture}
	\filldraw[draw=white, fill=blue!18] (0*\unit, 2.74*\unit) rectangle (1*\unit, 4*\unit);
	\filldraw[draw=white, fill=blue!18] (1*\unit, 2.62*\unit) rectangle (2*\unit, 4.92*\unit);
	\filldraw[draw=white, fill=blue!18, fill opacity = 100] (2*\unit, 2.79*\unit) rectangle (3*\unit, 4.62*\unit);
	\draw[draw = white, pattern=north west lines, pattern color=blue!70] (2*\unit, 0*\unit) rectangle (3*\unit, 5*\unit);
	\filldraw[draw=white, fill=blue!18] (3*\unit, 3*\unit) rectangle (4*\unit, 4*\unit);
	\draw[draw = white, pattern=north west lines, pattern color=blue!70] (3*\unit, 0*\unit) rectangle (4*\unit, 5*\unit);
	\filldraw[draw=white, fill=blue!18] (4*\unit, 0*\unit) rectangle (5*\unit, 5*\unit);
	\filldraw[draw=white, fill=blue!18] (5*\unit, 0*\unit) rectangle (6*\unit, 5*\unit);
	\filldraw[draw=white, fill=blue!18] (6*\unit, 0*\unit) rectangle (7*\unit, 4.93*\unit);
	\filldraw[draw=white, fill=blue!18] (7*\unit, 0*\unit) rectangle (8*\unit, 4.66*\unit);
	\draw[draw = white, pattern=north west lines, pattern color=blue!70] (7*\unit, 0*\unit) rectangle (8*\unit, 5*\unit);
	\filldraw[draw=white, fill=blue!18] (8*\unit, 0*\unit) rectangle (9*\unit, 4.09*\unit);
	\draw[draw = white, pattern=north west lines, pattern color=blue!70] (8*\unit, 0*\unit) rectangle (9*\unit, 4.91*\unit);
	\filldraw[draw=white, fill=blue!18] (9*\unit, 1*\unit) rectangle (10*\unit, 3*\unit);
	\draw[draw = white, pattern=north west lines, pattern color=blue!70] (9*\unit, 0*\unit) rectangle (10*\unit, 4.6*\unit);
	\filldraw[draw=white, fill=blue!18] (10*\unit, 0*\unit) rectangle (11*\unit, 3.97*\unit);
	\filldraw[draw=white, fill=blue!18] (11*\unit, 0*\unit) rectangle (12*\unit, 2.79*\unit);
	\filldraw[draw=white, fill=blue!18] (12*\unit, 0*\unit) rectangle (13*\unit, 1*\unit);
	\filldraw[draw=white, fill=blue!18] (13*\unit, 0*\unit) rectangle (14*\unit, 1*\unit);
	\filldraw[draw=white, fill=blue!18] (14*\unit, 0*\unit) rectangle (15*\unit, 1*\unit);
	\filldraw[draw=white, fill=blue!18] (15*\unit, 0*\unit) rectangle (16*\unit, 1*\unit);

	\draw[step=.5cm,gray,very thin, dashed] (0,0) grid (8,2.5);
	\draw[->] (-0.2,0) -- (8,0);
	\draw[->] (0,-0.2) -- (0,2.7);

	\foreach \x in {0,1,2,...,15}
	\draw (\x*0.5+0.25, - 0.2) node {\tiny \x};
	\draw (8.1, -0.2) node {\tiny $k$};
	\draw (-0.2, 2.65) node {\tiny $x$};
	\foreach \y in {1,2,...,5}
	\draw (-0.2, \y*0.5) node {\tiny \y};

	\draw[thick] (0*\unit, 3*\unit) -- (1*\unit, 3*\unit);
	\draw[thick] (1*\unit, 3.28*\unit) -- (2*\unit, 3.28*\unit);
	\draw[thick] (2*\unit, 3.6*\unit) -- (3*\unit, 3.6*\unit);
	\draw[thick] (3*\unit, 4.1*\unit) -- (4*\unit, 4.1*\unit);
	\draw[thick] (4*\unit, 3.1*\unit) -- (5*\unit, 3.1*\unit);
	\draw[thick] (5*\unit, 3.4*\unit) -- (6*\unit, 3.4*\unit);
	\draw[thick] (6*\unit, 2.7*\unit) -- (7*\unit, 2.7*\unit);
	\draw[thick] (7*\unit, 2.89*\unit) -- (8*\unit, 2.89*\unit);
	\draw[thick] (8*\unit, 3.13*\unit) -- (9*\unit, 3.13*\unit);
	\draw[thick] (9*\unit, 3.46*\unit) -- (10*\unit, 3.46*\unit);
	\draw[thick] (10*\unit, 3.9*\unit) -- (11*\unit, 3.9*\unit);
	\draw[thick] (11*\unit, 3.2*\unit) -- (12*\unit, 3.2*\unit);

	\draw[thick] (1*\unit,3*\unit) -- (1*\unit, 3.28*\unit);
	\draw[thick] (2*\unit,3.28*\unit) -- (2*\unit, 3.6*\unit);
	\draw[thick] (3*\unit,3.6*\unit) -- (3*\unit, 4.1*\unit);
	\draw[thick] (4*\unit,4.1*\unit) -- (4*\unit, 3.1*\unit);
	\draw[thick] (5*\unit,3.1*\unit) -- (5*\unit, 3.4*\unit);
	\draw[thick] (6*\unit,3.4*\unit) -- (6*\unit, 2.7*\unit);
	\draw[thick] (7*\unit,2.7*\unit) -- (7*\unit, 2.89*\unit);
	\draw[thick] (8*\unit,2.89*\unit) -- (8*\unit, 3.13*\unit);
	\draw[thick] (9*\unit,3.13*\unit) -- (9*\unit, 3.46*\unit);
	\draw[thick] (10*\unit,3.46*\unit) -- (10*\unit, 3.9*\unit);
	\draw[thick] (11*\unit,3.9*\unit) -- (11*\unit, 3.2*\unit);

	\filldraw[draw=black, fill=white] (6.65, 1.75) rectangle (7.8, 2.45);
	\filldraw[draw=white, fill=blue!18] (6.75, 2.15) rectangle (7.2, 2.4);
	\filldraw[draw=white, pattern=north west lines, pattern color=blue!70] (6.75, 1.8) rectangle (7.2, 2.05);
	\draw (7.5, 2.275) node {\tiny $\mathbb{X}_k$};
	\draw (7.5, 1.925) node {\tiny $\hat{\mathbb{X}}_k$};

	\draw[red, dash dot] (11.5*\unit, 3.2*\unit) ellipse [x radius=0.4cm, y radius=0.2cm];
	
\end{tikzpicture}
	\caption{A possible signal for  Case  Study I.}
	\label{fig:case}
\end{figure}

As an academic example, let us consider the following one-dimensional discrete-time nonlinear control system 
\[
	x_{k+1} = 0.2 x_k^2 + 0.16 x_k + u_k,
\]
where state constraint is $x_k \!\in\! [0, 5]$
and control input constraint is $u_k \!\in\! [-1,1]$.
The STL formula to monitor is given by 
$\Phi = ( x \!\in\! [0,4] \mathbf{U}_{[1,3]} x \!\in\! [3,5] ) \wedge ( \mathbf{F}_{[6,9]} x \!\in\! [1,3] ) \wedge ( \mathbf{G}_{[12,15]} x \!\in\! [0,1] )$.
Before starting online monitoring, for each time instant, we first compute the (exclusive) feasible set of STL formula $\Phi$ by Algorithm~2 and the results are shown in Fig.~\ref{fig:case}. 
Areas filled with blue and  dots are    the feasible sets $\mathbb{X}_k$ and the exclusive feasible sets $\hat{\mathbb{X}}_k$, respectively. 
For simplicity, we only draw the exclusive feasible sets $\hat{\mathbb{X}}_k$ for the horizon of $\mathbf{U}'$ and $\mathbf{F}$.

During the online monitoring process, the monitor observes the current state at each time and make an evaluation. For example, let us consider a possible state trace generated by the system shown as the black line in Fig.~\ref{fig:case}.
At instant $k=11$,  using the model-free approach, one can only make an inconclusive evaluation since the remaining signal can either satisfy  $\mathbf{G}_{[12,15]} x \in [0,1]$ or not without any constraint. 
However, using our model-based approach, since $x_{11} \notin \mathbb{X}_{11}$, 
we can conclude immediately that the formula will be violated inevitably since there exists no controller under which the STL formula is satisfied.
Therefore, compared with existing model-free algorithms \cite{ho2014online, deshmukh2017robust}, our method can claim the violation of specification in advance at instant 11, while existing algorithms cannot provide a clear violation conclusion.

\subsection{Case Study II}
\begin{figure}[t]
	\centering

\def\unit{0.6}

\begin{tikzpicture}
	\filldraw[draw=black, fill=white] (0, 0) rectangle (10*\unit, 6*\unit);
	\foreach \x in {0,2,4,6,8,10}
	{
	\draw (\x*\unit, -0.2) node {\tiny \x};
	\draw (\x*\unit,-0.08) -- (\x*\unit, 0);
	}
	\foreach \y in {0,2,4,6}
	{
	\draw (-0.2, \y*\unit) node {\tiny \y};
	\draw (-0.08, \y*\unit) -- (0, \y*\unit);
	}

	\filldraw[draw=white, fill=blue!15] (6.1*\unit, 0.2*\unit) rectangle (9.9*\unit, 3.8*\unit);
	\draw (8*\unit, 3.5*\unit) node {$\mathbb{X}_7$};

	\filldraw[draw=white, fill=blue!30] (1*\unit, 2*\unit) rectangle (3*\unit, 4*\unit);
	\draw (1.5*\unit, 3.5*\unit) node {$A_1$};
	\filldraw[draw=white, fill=blue!30] (4*\unit, 4*\unit) rectangle (6*\unit, 6*\unit);
	\draw (4.5*\unit, 5.5*\unit) node {$A_2$};
	\filldraw[draw=white, fill=blue!30] (7*\unit, 1*\unit) rectangle (9*\unit, 3*\unit);
	\draw (7.5*\unit, 2.5*\unit) node {$A_3$};

	\def\A{$A$}
	\coordinate (0) at (1.2*\unit, 1*\unit);
	\coordinate (1) at (2*\unit, 1.8*\unit);
	\coordinate (2) at (1.9*\unit, 2.7*\unit);
	\coordinate (3) at (2.8*\unit, 3.5*\unit);
	\coordinate (4) at (3.7*\unit, 4.3*\unit);
	\coordinate (5) at (4.6*\unit, 4*\unit);
	\coordinate (6) at (5.5*\unit, 4.7*\unit);
	\coordinate (7) at (6.4*\unit, 4*\unit);
	\draw (0) -- (1) -- (2) -- (3) -- (4) -- (5) -- (6) -- (7);
	\foreach \point in {0,1,2,3,4,5,6,7}
	\fill [black,opacity=.7] (\point) circle (2pt);

	\draw[red, dash dot] (6.4*\unit, 4*\unit) circle (0.4*\unit);
	\draw (7.2*\unit, 4*\unit) node {\small{$x_7$}};

\end{tikzpicture}
	\caption{Workspace of the robot in Case Study II.}
	\label{fig:case2}
\end{figure}
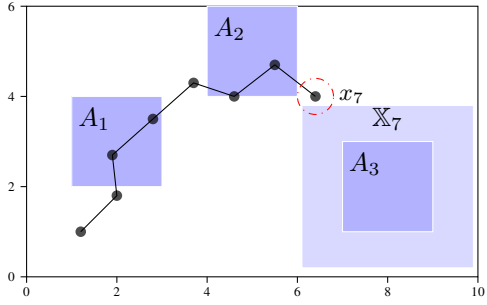
As a practical example, let us consider an autonomous robot whose dynamic model is given as follows 
	\[
	\begin{bmatrix}
		x_{k+1} \\
		y_{k+1} 
	\end{bmatrix} = 
	\begin{bmatrix}
		1 & 0 \\
		0 & 1 
	\end{bmatrix}
	\begin{bmatrix}
		x_{k} \\
		y_{k} 
	\end{bmatrix} + 
	\begin{bmatrix}
		0.9 & 0 \\
		0 & 0.8 
	\end{bmatrix}
	\begin{bmatrix}
		ux_{k} \\
		uy_{k}
	\end{bmatrix},
	\]
where $x_k, y_k, ux_k, uy_k$ are the positions and control inputs in $X$ and $Y$ directions at instant $k$ respectively, and physical constraints are $x_k \!\in\! [0, 10], y_k \!\in\! [0,6]$ and $ ux_k, uy_k \!\in\! [-1,1]$. 

The objective of the robot is to send some raw materials from regions $A_1$ and $A_2$ to region $A_3$ and then stay in region $A_3$ to assemble the machine. By considering the opening time of each region, the robot needs to reach each region within a specified time interval.
Then the task of the robot is  described by the following STL formula 
\[
\Phi = \textbf{F}_{[0,3]}A_1 \wedge \textbf{F}_{[4,6]} A_2 \wedge \textbf{G}_{[8, 10]} A_3,
\]
where $A_1 = (x \!\in\! [1, 3]) \wedge (y \!\in\! [2, 4])$, $A_2 = (x \!\in\! [4, 6]) \wedge (y \!\in\! [4, 6])$ and $A_3 = (x \!\in\! [7, 9]) \wedge (y \!\in\! [1, 3])$.

Consider a trajectory of the robot up to  instant $k=7$  shown in  Figure~\ref{fig:case2}, where the feasible set $\mathbb{X}_7$ computed offline is also depicted; (exclusive) feasible sets for other instants are omitted in the figure for the sake of clarity. 
Then at instant $7$, since the observed state $x_7$ is not in   $\mathbb{X}_7$, the monitor can alarm immediately that the robot has violated the STL task no matter what will happen in the future.
 
\section{Conclusion}\label{sec:con}
In this paper, we proposed a new model-based approach for online monitoring of tasks described by  signal temporal logic formulae. Our algorithm  consists of both offline pre-computation and online monitoring. 
Most of the computation efforts are made for the offline computation characterized by the notion of feasible sets. The offline computed information is used during the online monitoring to provide evaluations in real-time. We showed that the proposed method can evaluate the violation earlier than existing model-free approaches.  Simulation results were provided to illustrate our results.
Note that, in this work, we assume that there is no overlap between the horizon of each temporal operator. In the future, we would like to relax this assumption to further generalize our result.

\bibliographystyle{plain}
\bibliography{STL}   

\end{document}